\newtheorem{thm}{Theorem}
\newtheorem{lemma}{Lemma}
\begin{document}

\title{Heuristic-based Optimal Resource Provisioning in Application-centric Cloud}

\author{Sunirmal~Khatua,~\IEEEmembership{Member,~IEEE,}
        Preetam~K.~Sur,
        Rajib~K.~Das~ 
        and Nandini~Mukherjee,~\IEEEmembership{Senior~Member,~IEEE}

\IEEEcompsocitemizethanks{\IEEEcompsocthanksitem S. Khatua and R.K.Das  are with the Department
of Computer Science and Engineering, University of Calcutta, India.\protect\\
E-mail: skhatuacomp@caluniv.ac.in and rajib.das.k@gmail.com
\IEEEcompsocthanksitem P. K. Sur is with the Jadavpur University, Kolkata, India.\protect\\
E-mail: mailtopreetam@gmail.com
\IEEEcompsocthanksitem Nandini Mukherjee is with the Department
of Computer Science and Engineering,Jadavpur University, India\protect\\
E-mail:nmukherjee@cse.jdvu.ac.in}
\thanks{}}

\markboth{IEEE Transactions on Cloud Computing, 2014}%
{Shell \MakeLowercase{\textit{et al.}}: Bare Demo of IEEEtran.cls for Computer Society Journals}

\IEEEcompsoctitleabstractindextext{%
\begin{abstract}
Cloud Service Providers~(CSPs) adapt different pricing  models
for their offered services. Some of the models are suitable for
short term requirement while others may be suitable for the Cloud
Service User's~(CSU) long term requirement. In this paper, we
look at the problem of finding the amount of resources to be
reserved to satisfy the CSU's long term demands with the
aim of minimizing the total cost. Finding the optimal resource
requirement to satisfy the the CSU's demand for resources needs
sufficient research effort. Various algorithms were discussed in
the last couple of years for finding the optimal resource
requirement but most of them are based on IPP which is NP in nature. In this paper, we
derive some heuristic-based polynomial time algorithms to find
some near optimal solution to the problem. We show that the cost
for CSU using our approach is comparable to the solution obtained
using optimal Integer Programming Problem(IPP).

\end{abstract}
\begin{keywords}
Cloud Computing, Amazon EC2, Reserved Instance, Heuristics, Cost Optimization
\end{keywords}}

\maketitle

\IEEEdisplaynotcompsoctitleabstractindextext
\IEEEpeerreviewmaketitle

\section{Introduction}

Recent advances in distributed computing has brought a paradigm
shift in the computing world. The practices of dedicated access
to computers owned by individuals or organizations have been
replaced by on-demand accesses to resources shared among many
individuals and different organizations. Cloud Computing has
largely contributed to this transition by enabling ubiquitous,
convenient, on-demand network access to a shared pool of
configurable computing resources (e.g., networks, servers,
storage, applications, and services) that can be rapidly
provisioned and released with minimal management effort or
service provider interaction \cite{buyaa08}

Present practice in the Cloud Computing arena is that the Cloud
Service Users (CSUs) request for resources and Cloud Service
Providers (CSPs) allot the virtualised resources to the CSUs
according to their requirements. However, while requesting the
resources, CSUs face a major challenge due to various pricing
schemes offered by the CSPs. Resources are available on
reservation basis, as well as on demand basis. Reservation is to
be done for a fixed contract period with a fixed price. However,
reservation for a longer period or for larger amount of resources
than that required to meet the demand for
resources of an application, may lead to over-provisioning and
higher cost for the CSUs. On the other hand, on-demand prices are
generally higher than the reserved one and therefore if the resources are allotted only
on-demand basis, cost to be paid by the CSUs will again be high.
Therefore, some optimization strategies are required to reduce
the resource usage charges from the users' point of view.

This paper focuses on the optimizing strategies aimed at
reducing the total cost of cloud deployment. Remaining part of
the paper is organised as follows. Various pricing schemes offered by different CSPs are discussed in the next section. A brief overview of the related work is given in section \ref{relatedwork}. An IPP based optimal resource provisioning strategy for the application-centric cloud framework is discussed in Section~ \ref{sec_ipp}. Section \ref{sec_heuristic} and \ref{sec:heuristic1} present the proposed heuristics for polynomial time resource reservation strategies. The performance of the heuristic based approach compared to the IPP is presented in Section \ref{sec:result} and section \ref{sec:conclude} concludes the paper.

\section{Pricing Schemes}
\label{price} Cloud computing service providers offer computing
resources as utility and software as a service over a network.
Cloud users pay for these resources or services on the basis of
their usage. Optimising the cost of resources and services from
the {\em Cloud Service User's} point of view is a tricky issue as
the CSPs often provide non-fixed pricing models for their
services. A brief description of the general trend in the pricing
models offered by the CSPs is given below.

\begin{enumerate}
\item Fixed Costs - Resources are charged by its type and duration ( e.g. month, year etc).
Here exactly one price is assigned for a fixed time duration
(say, for a month). The cost is computed as the fixed charge
multiplied by the number of resource instances or service units
requested by the consumer. The users pay the fixed cost even
though the resources may not be used for the entire time
duration.
\item Variable Costs - Under variable cost model, Cloud computing delivers computing power and
services to consumers on a variable cost pay-as-you-go basis
determined by the number of users and their volume of
transactions. Resources are charged by their types and usage
(e.g. per hour) and with no long-term commitments or upfront
payments. Here the CSUs need to pay on different items and deal
with several different structures. Resources are allotted
on-demand basis and the users do not need to pay if the resources
are not used.

\item Hybrid Costs - Hybrid Costs are a mixed form of the former two cost structures, where one can find a variable part and a fixed part.

\item Flexible Costs - Resources are charged by its type
and time of usage. At a specific instance of time, the cost of
the resource is set depending on the demand of the resources.
Like variable costs, users do not need to pay if they do not use
the resources.
\end{enumerate}

In this paper, we have used the pricing schemes offered by Amazon
as a case study. Amazon EC2, one of the major IaaS
(Infrastructure As A Service) provider, offer pricing schemes in
the form of On-Demand, Reserved and Spot instances. On-Demand
instances support variable costs having a fixed hourly charge of
usage. Reserved instances support the hybrid cost model with a
fixed one time reservation cost, which depends on the contract
duration (e.g. 1 year or 3 years) and a hourly usage cost which
is a discounted price over the cost of the On-Demand instances.
Flexible costs are supported by the Spot Instances which are
similar to On-Demand but having flexible hourly charge of usage.
So cost optimization can be considered on Spot instances,
Reserved instances, as well as on both. In this paper, we have not
considered Spot instances for cost minimization.

\section{Related Work}\label{relatedwork}
Various aspects of multiple pricing schemes have
been studied in \cite{pricing}-\cite{optimization3}. Most of these
research works consider reserved versus on-demand pricing scheme. A model for
determining the price of a reserved instance as well as an
on-demand instance is proposed in \cite{pricing}. The research work
presented in \cite{reserved_vs_ondemand} considers the revenue maximization
problem from a CSP's point of view, separating the CSUs into
`premium'~(with reserved services) and `basic'~(with on-demand
services) users. Both papers demonstrate the effectiveness of having
multiple pricing schemes, such as reservation and on-demand which are
followed by most of the CSPs now-a-days.

Cost optimization from a CSU's point of view, by
considering different pricing schemes, have been studied in
\cite{optimization1},  \cite{optimization2} and \cite{optimization3}.
A stochastic integer
programming model has been adopted in \cite{optimization1} to
optimize the cost of SLA-aware resource scheduling in cloud.
H.~Menglan et al., in their paper \cite{optimization2}, consider
on-demand and reserved instances to optimize cost for deadline
constrained jobs and to optimize total processing time for budget
constrained jobs. An optimal cloud resource provisioning (OCRP)
algorithm is proposed by S.~Chaisiri et al. in their paper
\cite{optimization3}. They have also considered a stochastic
integer programming model with demand and price uncertainty in
the OCRP algorithm. They have extended their work by considering
spot pricing scheme in \cite{optimization3}. These solutions need
the predicted demands for resources. Demand forecasting
models for twelve months have been developed in \cite{forecasting}
based on historical data.

Most of these previous works for optimizing cost use some form of
integer programming model which is NP in nature. They have
solved the problem using some techniques like benders decomposition
and sample-average approximation as have been used in \cite{optimization3}.
To the best of our knowledge, no one has used some good heuristics to
solve the cost optimization problem in polynomial time. Motivated
by these works, we have developed some heuristics to solve the cost
optimization problem in linear time. We have also
mathematically proved that our heuristic provides optimal
solution in certain scenarios. However, for simplicity, we do not
consider uncertainty of demands and prices in this work.

\section{Optimal Resource Provisioning in Application-Centric Cloud}\label{sec_ipp}
In our previous work \cite{aiccsa11}, we have proposed a resource
provisioning framework for application-centric cloud as shown in
Figure~\ref{fig_framework}.
\begin{figure}[h!]
\centering
\includegraphics[width=3.4in]{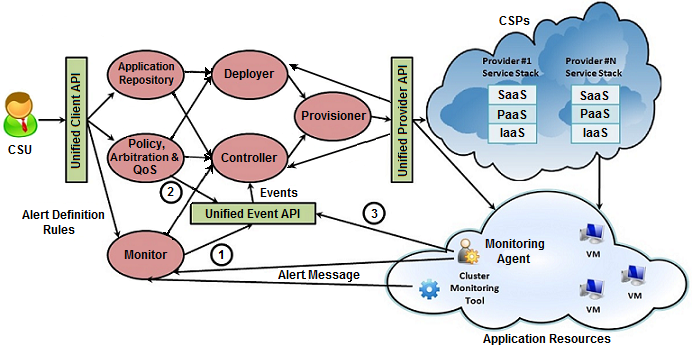}
\caption{Resource Provisioning Framework for Application-centric Cloud}
\label{fig_framework}
\end{figure}
Two key components of the framework are the Deployer module and
the Controller module. The Deployer module statically analyzes an
application to determine the optimal resource requirement for the
application. The Controller module dynamically
fine-tunes the provisioned resources to alleviate
the overprovisioning and underprovisioning situations.

In this paper, we extend the functionality of the Deployer module
to statically determine the amount of resources to be reserved in
advance in order to minimize the total cost of running an
application. While doing this, we make the following assumptions -
\begin{itemize}
\item An application is run through different stages denoted by $t, 1\le t\le T$. The granularity of a
stage is determined by the number of hours per stage ($h$).
\item At every stage of execution of the application, its demand for resources is known
(or estimated using the mechanisms described in \cite{forecasting}) and is
denoted by the vector $D$. Since the reservation made at any
particular time has a long term effect, in order to find an
optimal reservation schedule, the future value of demand for
resources is to be known in advance. It is therefore assumed that
the predicted values of the  demand vector are available at each
stage $t$, $1\le t\le T$.
\item Resources can be reserved under different contracts (e.g. for a 1 year contract or for a 3 year contract etc.)
\item Reserved resources have a one-time fixed charge for the duration of the contract and a variable usage
charge at discounted rate to be paid on hourly basis.
\item Resource instances can be reserved and launched at the beginning of a stage and can be terminated at the end of a stage.
If the demand at a given stage $t$ is more than the total amount of resources reserved, the balance requirement is satisfied with on-demand instances.
\end{itemize}

Thus, the optimal resource reservation problem can be formally stated as
\begin{quote}
\emph{Given a vector of demands D for resources at different
stages $t$, $1\le t\le T$, find the number of resources to
be reserved under different contracts  at
each stage $t$, so that the total cost is minimized satisfying the demands at every stage.}
\end{quote}

The above problem can be formulated as an Integer Programming Problem(IPP), which is discussed below.

\subsection{IPP-based Optimal Resource Reservation}
This section discusses an IPP formulation to find the optimal
resource reservation schedule to minimize cost. The various
notations used in the formulation are given in
Table~\ref{table_notations}. There are $K$ different types of
reservation contracts offered by a CSP. Each type
of contract~($k$) is associated with a one time reservation
cost~($R_k$), usage cost~($r_k$) per hour and its duration~($t_k$)
in number of stages. At every stage~($t$), we need to decide how
much instances to be reserved~($x_{t}^{R_k}$). We also need to determine the number instances to be launched based on reservation from contract $k$ ($x_{t}^{r_k}$) as well as on-demand~($x_{t}^{o}$). The cost of
on-demand instances is more than the usage cost of reserved
instances~($o > r_k$). At the same time, increasing reservation incurs high reservation cost. So, the objective is to find the optimal reservation balancing these two factors.

\begin{table}[h!]
\renewcommand{\arraystretch}{1.3}
\caption{Definitions and Notations}
\label{table_notations} \centering
\begin{tabular}{|l|l|}
\hline
\bfseries Symbol &\bfseries Definition \\
\hline\hline
$K$ &Number of reservation contracts offered by a CSP.\\
\hline
$T$ &Total number of stages for which the demands are available.\\
\hline
$h$ &Duration of each stage in hours\\
\hline
$D_t$ &The demand for instance-hours at stage $t$, $1 \le t \le T$.\\
\hline
$R_k$ &One time reservation charge per instance under contract $k$.\\
\hline
$t_k$ &Duration of contract $k$ in stages, $1\le k\le K$.\\
\hline
$r_k$ &Usage charge per hour for a reserved instance\\
 &under contract $k$.\\
\hline
$o$ &Usage charge per hour for an on-demand instance.\\
\hline
$x_{t}^{R_k}$ &Number of instances reserved under contract $k$ at stage $t$.\\
\hline
$x_{t}^{r_k}$ &Number of reserved instances launched under\\
& contract $k$ at stage $t$.\\
\hline
$x_{t}^{o}$ &Number of on−demand instances launched at stage $t$.\\
\hline
\end{tabular}
\end{table}

\subsection{IPP Formulation}
In this section, an integer programming problem is formulated with an objective of optimal resource provisioning.\\
Based on the previous definitions, the cost at any given stage $t$ is denoted by $C_t$, where\\\\
$C_t=\sum_{k=1}^K (x_{t}^{R_k}.R_k +x_t^{r_k}.r_k.h)+x_t^o.o.h$ :\\\\
The first term in the above expression stands for the cost of reservation under contract $k$, the second term stands for
the cost of using reserved instances, and the last term stands for the cost of using on-demand instances.

Our objective is to minimize the total cost over all stages with the following constraints -
\begin{enumerate}
  \item Number of reserved instances and number of instances launched from reserved instances as well as on-demand instances at any stage are non-negative integers.
  \item Instances launched at any stage $t$ under contract $k$ cannot exceed the total reserved instances under contract $k$. It is to be noted that
reservation under contract $k$ made at stage $t$ is effective up to stage $t+t_k-1$. In other words, only the instances which have been reserved at stage $t-t_k+1$ or after that are available at stage $t$.
  \item At any stage $t$, total number of instances launched under all the contracts plus on demand instances must satisfy the demand of the application.
\end{enumerate}

\noindent
Thus, the integer programming problem is as follows:

~\\
Minimize $\sum_{t=1}^T C_t$ subject to the following conditions  \\
\\
i) $x_{t}^{R_k}, x_{t}^{r_k}, x_{t}^{o} \ge 0~and~integer$\\
\\
ii) $ \sum_{i=t-t_k+1,i \ge 1}^{t} x_{i}^{R_k} \geq x_{t}^{r_k} ~~\forall~k=1,2,...,K ~\&~$\\$~~~~~~~~~~~~~~~~~~~~~~~~~~~~~~~~~~~~~~ \forall~t=1,2,...,T$ \\
\\
iii) $x_t^o + \sum_{k=1}^K x_{t}^{r_k}  \geq D_t ~~ \forall~t=1,2,...,T$
\\

\section{Heuristic-based Resource Reservation}\label{sec_heuristic}
An IPP is NP-hard and therefore, no polynomial time algorithm does
exist to solve the optimal resource reservation problem as
discussed in Section~\ref{sec_ipp}. In this
section, we derive some heuristics for solving the optimal
resource reservation problem in polynomial time. We show that
when there is a single type contract $k$ with contract duration
of $t_k$ stages, and the demand vector is available for stages
$t=1, \cdots ,t_k$, it is possible to determine the optimal value
for reservation under contract $k$. The discussion in this
section is based on the assumption that {\it duration of each stage is
1 hour.}

The usage cost for a demand $d$ in a single stage~(1 hour) with
$x$ reserved instances under contract $k$ is:

\begin{equation}\label{equ1}
\begin{array}{ccl}
\c{C}(x,d)& = & \left\{
\begin{array}{c l}
    d.r_k & if~x \geq d\\
    x.r_k + (d-x).o & otherwise
\end{array}\right. \\
& = & r_{k}.min(x,d) + o.max(d-x,0)
\end{array}
\end{equation}

Let us assume that the vector of demands $D$ for the duration of
$t_k$ stages is available. Let $x$ be the amount of resources
reserved under contract $k$ and let $E_x$ denotes the total cost
corresponding to demand vector $D$, that includes both, cost of
reservation and cost of using the resources. Thus, combining
equation~\ref{equ1} with the cost of reservation, $E_x$ can be
written as

\begin{equation}\label{equ_Ex}
E_x= x.R_k+ \sum_{i, D_i\le x} D_i.r_k + \sum_{i, D_i>x} (x.r_k+ (D_i-x).o)
\end{equation}

Let $D^s$ be the demand vector obtained by sorting $D$ in
ascending order. Considering $x=D_j^s$, Equ.~\ref{equ_Ex} can be
rewritten as

\begin{equation}
\label{cost vector j}
E_{D_j^s} =  D_{j}^{s}.R_{k}+ \sum_{i=1}^{j} D^s_i.r_{k} +\sum _{i=j+1}^{t_k}(D^s_j.r_k+(D_{i}^{s}-D_{j}^{s}).o)
\end{equation}

Now by replacing the on-demand cost $o$  with $r_k+\alpha_{k}$,
where $\alpha_k=o-r_k$ is the hourly discount for using reserved
instances under contract $k$ over on-demand instances, we have
\begin{equation}\label{eq2}
\begin{array}{lll}
E_{D_j^s} &=&D_{j}^{s}.R_{k}+\sum_{i=1}^{j}D_{i}^{s}.r_{k}+\\
 &&\sum_{i=j+1}^{t_{k}}\biggl[D_{j}^{s}.r_{k}+(D_{i}^{s}-D_{j}^{s}).(r_{k}+\alpha_{k})\biggr]\\
&=&D_{j}^{s}.R_{k} + \sum_{i=1}^{t_{k}}D_{i}^{s}.r_{k} + \sum_{i=j+1}^{t_{k}}(D_{i}^{s}-D_{j}^{s}).\alpha_{k}
\end{array}
\end{equation}

If the amount of reserved resources is increased to the next
higher value in the sorted demand vector,~(i.e. $D_{j+1}^S$ as
shown in Fig.~\ref{fig_demands}),then the total cost is given by
equation~\ref{cost vector j+1}.

\begin{equation}
\label{cost vector j+1}
E_{D_{j+1}^s} = D_{j+1}^{s}.R_{k} + \sum_{i=1}^{t_{k}}D_{i}^{s}.r_{k} + \sum_{i=j+2}^{t_{k}}(D_{i}^{s}-D_{j+1}^{s}).\alpha_{k}
\end{equation}

\begin{figure}[h!]
\centering
\includegraphics[width=3.0in]{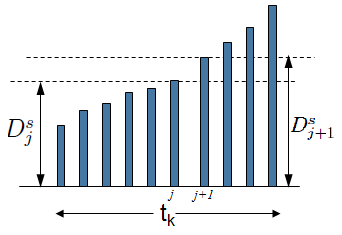}
\caption{Increasing reservation from $D_j^s$ to $D_{j+1}^s$.}
\label{fig_demands}
\end{figure}

Thus, the change in total expenditure due to change in reservation from $D_{j}^S$ to $D_{j+1}^S$ is

$\Delta{E}_{D_j^s,D_{j+1}^s}$
\begin{equation}
\label{cost change}
\begin{array}{ccl}
&=&(D_{j+1}^{s} - D_{j}^{s}).R_{k} - \sum_{i=j+2}^{t_{k}}((D_{j+1}^{s}-D_{j}^{s}).\alpha_{k} - \\
&&(D_{j+1}^{s}-D_{j}^{s}).\alpha_{k})\\
&=&(D_{j+1}^{s} - D_{j}^{s}).R_{k} - (D_{j+1}^{s}-D_{j}^{s}).\alpha_{k}(t_k - j)\\
&=&(D_{j+1}^{s} - D_{j}^{s})(R_{k} - t_{k}.\alpha_{k} + j.\alpha_{k})
\end{array}
\end{equation}

Now we state the following two lemmas based on the characteristics of reserved instances and equation \ref{cost change}.
\begin{lemma}\label{lemma_rk_tk}
For any contract type $k$, $R_k < t_k.\alpha_k$.
\end{lemma}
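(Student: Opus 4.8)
The plan is to derive this inequality from the economic logic that underlies any reserved-instance pricing scheme, rather than from a purely symbolic manipulation of the earlier equations, because the statement is in effect a constraint that a rational CSP must satisfy when it sets $R_k$, $r_k$ and $o$ (and which the Amazon EC2 figures used in the paper do satisfy).

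First I would note that a single instance reserved under contract $k$ at a stage $t$ is usable exactly over the $t_k$ consecutive stages $t, t+1, \dots, t+t_k-1$, so the greatest benefit that one reservation can ever yield is obtained when the reserved instance is launched in every one of those $t_k$ stages. In that best case the amount billed for the instance is the one-time charge plus full hourly usage, $R_k + t_k\, r_k\, h$, which equals $R_k + t_k\, r_k$ under the standing assumption $h = 1$.

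Next I would compare this with the cost of covering the same ``one instance per stage for $t_k$ stages'' workload purely on demand, namely $t_k\, o\, h = t_k\, o = t_k(r_k + \alpha_k)$, using the substitution $o = r_k + \alpha_k$ introduced just before Equation~(\ref{eq2}). If $R_k + t_k r_k \ge t_k(r_k+\alpha_k)$, then even a fully utilised reservation never beats on demand, so $x_t^{R_k}=0$ is optimal in every instance and contract $k$ is pointless to offer or to buy; ruling this degenerate case out yields $R_k + t_k r_k < t_k r_k + t_k \alpha_k$, i.e. $R_k < t_k\,\alpha_k$.

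As a consistency check I would revisit Equation~(\ref{cost change}): the marginal change in total cost when the reservation level is raised from $D_j^s$ to $D_{j+1}^s$ carries the sign of $R_k - t_k\alpha_k + j\alpha_k$. Were the lemma false, this quantity would be nonnegative for every $j \ge 0$, so no increase in reservation could ever reduce the total cost --- the same degenerate conclusion --- which shows that $R_k < t_k\alpha_k$ is precisely the hypothesis that makes the reservation decision nontrivial. The ``hard part'' here is not computational at all; it is recognising that the lemma is a modelling assumption about how CSPs price reserved instances and presenting the break-even comparison above as its justification, while being careful that the most favourable case for a reservation really is full utilisation over the entire contract window.
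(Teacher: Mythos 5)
Your proof is correct and is essentially the same argument the paper gives: the paper also compares a fully utilised reservation (constant demand $x$ over all $t_k$ stages, cost $R_k x + r_k t_k x$) against pure on-demand provisioning (cost $o\, t_k x$) and invokes the same economic-rationality premise that a contract which never beats on-demand would not be offered or purchased, yielding $R_k < t_k\alpha_k$. Your additional sign-check against the marginal-cost expression is a nice sanity check but not part of the paper's proof.
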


\begin{proof}
Consider a situation where demand is constant, i.e., $D_j=x$ for
$1\le j\le t_k$. If we reserve the amount $x$, the cost is $R_k.x
+ r_k.t_k.x$ and there is no on demand charge. If we make no
reservation then cost is $o.t_k.x$. The second option is
costlier, because if provisioning using only on-demand instances
is cheaper, then no one would go for reservation. Thus, $o.t_k.x>
R_k.x + r_k.t_k.x$, which implies $t_k.\alpha_k>R_k$, where
$o=r_k+\alpha_{k}$
\end{proof}
~
\begin{lemma}\label{lemma_member}
The number of instances to be reserved, for which the total cost
is minimized, is always a member of the demand vector.
\end{lemma}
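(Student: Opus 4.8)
The plan is to show that the cost function $E_x$, viewed as a function of the integer reservation level $x$, attains its minimum at some value that coincides with one of the demand entries $D_i$. First I would observe from Equation~\ref{equ_Ex} that $E_x$ is a piecewise-linear function of $x$: on each interval between consecutive sorted demand values $D_j^s$ and $D_{j+1}^s$, the set $\{i : D_i \le x\}$ does not change, so $E_x$ is linear in $x$ on that interval, with a constant slope. Hence $E_x$ restricted to $[D_j^s, D_{j+1}^s]$ is monotone, and its minimum over that subinterval is attained at one of the two endpoints. Since every interval endpoint is a member of the demand vector (together with the extreme cases $x=0$, which can be treated as $x=D_0^s:=0$, and $x$ at least $D_{t_k}^s$, beyond which increasing $x$ only adds reservation cost $R_k$ per unit and so is never beneficial), the global minimum over all feasible $x$ must occur at some $D_i^s$.

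Next I would make the monotonicity argument quantitative using Equation~\ref{cost change}: the change in cost when moving the reservation from $D_j^s$ to $D_{j+1}^s$ is
\[
\Delta{E}_{D_j^s,D_{j+1}^s} = (D_{j+1}^{s} - D_{j}^{s})\bigl(R_{k} - t_{k}\alpha_{k} + j\alpha_{k}\bigr).
\]
Because $D_{j+1}^s - D_j^s \ge 0$, the sign of $\Delta E$ is governed by the factor $R_k - (t_k - j)\alpha_k$, which is increasing in $j$. So as $j$ runs from $0$ to $t_k-1$, the increments $\Delta E$ are first negative (while $j$ is small and $R_k < (t_k-j)\alpha_k$, which is nonempty by Lemma~\ref{lemma_rk_tk}) and then become nonnegative once $j$ is large enough. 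This means the sequence $E_{D_0^s}, E_{D_1^s}, \dots, E_{D_{t_k}^s}$ is unimodal — nonincreasing then nondecreasing — so its minimum is attained at some index $j^\star$, and that is a member of the demand vector. Combined with the piecewise-linearity observation (no interior point of an interval can beat both its endpoints), this establishes the claim.

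The main obstacle I anticipate is handling the boundary behavior cleanly: I must argue carefully that it never pays to reserve strictly more than the maximum demand $D_{t_k}^s$ (each extra reserved unit costs an additional $R_k$ and saves nothing, since no stage demands it), and symmetrically that the $x=0$ case is covered by adopting the convention $D_0^s = 0$; without these the ``always a member of the demand vector'' statement would need the minor caveat that $0$ counts as a trivial member. A secondary subtlety is that $x$ ranges over integers, not reals, but since all the $D_i$ are integers the interval endpoints are integers and the piecewise-linear function on integer points inherits the same monotone-then-monotone structure, so the integrality constraint causes no difficulty. The rest is routine algebra already carried out in Equations~\ref{equ_Ex}--\ref{cost change}.
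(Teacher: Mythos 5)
Your proof is correct and follows essentially the same route as the paper: both arguments rest on the observation that $E_x$ is linear in $x$ with slope $f(j,k)=R_{k}-(t_{k}-j)\alpha_{k}$ on each interval between consecutive sorted demand values, so no interior point can beat both endpoints (the paper packages this as a proof by contradiction, you state the monotonicity directly). Your extra care with the boundary cases $x=0$ and $x>D_{t_k}^{s}$ is a minor strengthening over the paper's version, which only treats points strictly between two consecutive demand values.
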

\begin{proof}
We prove the above lemma by contradiction.  If possible,
let us consider the existence of $p$, $D_j^s < p < D_{j+1}^s$, such that
$E_{D_j^s} > E_p$ and $E_{D_{j+1}^s} > E_p$.

Now, we have, \\\\
$\begin{array}{ccl}
E_p & = & p.R_k + \sum_{i=1}^{j}{D_i^s.r_k}+ \\
&&\sum_{i=j+1}^{t_k}{[p.r_k + (D_i^s - p)(r_k + \alpha_k)]}$ and $\\\\
\Delta{E}_{D_j^s,p} & = & (p - D_j^s)(R_{k} - t_{k}.\alpha_{k} + j.\alpha_{k})\\
&=&(p - D_j^s).f(j,k)\\\\
\end{array}$
where, the expression $R_k - t_k.\alpha_k + j.\alpha_k$ is
denoted by $f(j,k)$. Similarly, it can be shown that\\\\
$\begin{array}{ccl}
\Delta{E}_{p,D_{j+1}^s}= (D_{j+1}^s - p).f(j,k)\\\\
\end{array}$
\\
As $D_j^s < p < D_{j+1}^s$, $\Delta{E}_{D_j^s,p}$ and
$\Delta{E}_{p,D_{j+1}^s}$ will have the same sign which will
depend on the sign of $f(j,k)$. But, this contradicts our initial
assumption that $E_{D_j^s} > E_p$ and $E_{D_{j+1}^s} > E_p$.
Hence, the proof.
\end{proof}
~

Based on lemma \ref{lemma_rk_tk} and lemma \ref{lemma_member},
the following theorem is stated.
\begin{thm}\label{t_minimum}
For a single type contract $k$, with demand vector $D$ available
for $t=1,2, \cdots t_k$ stages there exists a value of $j_k=
t_k-\lfloor R_k/\alpha_k\rfloor$, such that with reservation of
$D_{j_{k}}^s$, the cost is minimum.
\end{thm}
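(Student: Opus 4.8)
The plan is to combine the two lemmas to pin down exactly which member of the sorted demand vector minimizes the cost. By Lemma~\ref{lemma_member}, the optimal reservation amount is $D_{j}^s$ for some $j$, so it suffices to determine for which index $j$ the value $E_{D_j^s}$ is smallest among $j = 1, 2, \ldots, t_k$. The key tool is the expression for the incremental change derived in equation~\ref{cost change}, namely $\Delta{E}_{D_j^s, D_{j+1}^s} = (D_{j+1}^s - D_j^s)\,f(j,k)$, where $f(j,k) = R_k - t_k.\alpha_k + j.\alpha_k$. Since the demand vector is sorted in ascending order, the factor $(D_{j+1}^s - D_j^s)$ is always nonnegative, so the sign of $\Delta{E}_{D_j^s, D_{j+1}^s}$ is governed entirely by the sign of $f(j,k)$.

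Next I would analyze $f(j,k) = R_k - (t_k - j)\alpha_k$ as a function of $j$. It is strictly increasing in $j$ (with slope $\alpha_k > 0$). By Lemma~\ref{lemma_rk_tk}, $R_k < t_k.\alpha_k$, i.e. $f(0,k) = R_k - t_k.\alpha_k < 0$, so $f$ starts negative; and for $j$ close to $t_k$ we have $f(j,k) \approx R_k > 0$, so $f$ eventually becomes positive. Hence there is a unique threshold index at which $f$ changes sign. Solving $f(j,k) = R_k - (t_k - j)\alpha_k \ge 0$ gives $j \ge t_k - R_k/\alpha_k$, so $f(j,k) < 0$ precisely when $j < t_k - R_k/\alpha_k$ and $f(j,k) \ge 0$ when $j \ge t_k - R_k/\alpha_k$. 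Setting $j_k = t_k - \lfloor R_k/\alpha_k \rfloor$, one checks that $f(j,k) < 0$ for all $j < j_k$ and $f(j,k) \ge 0$ for all $j \ge j_k$ (the floor is exactly what is needed to make this clean for integer $j$).

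Finally I would translate this sign pattern into monotonicity of the sequence $E_{D_1^s}, E_{D_2^s}, \ldots, E_{D_{t_k}^s}$. For $j < j_k$ we have $\Delta{E}_{D_j^s, D_{j+1}^s} \le 0$, so the cost is non-increasing up to index $j_k$; for $j \ge j_k$ we have $\Delta{E}_{D_j^s, D_{j+1}^s} \ge 0$, so the cost is non-decreasing from index $j_k$ onward. Therefore $E_{D_{j_k}^s}$ is the minimum over all $j$, and combined with Lemma~\ref{lemma_member} this is the global optimum. The one subtlety to handle carefully — and the main obstacle — is the boundary behavior at the edges of the floor function and at the endpoints $j=1$ and $j=t_k$: one should verify that $j_k$ indeed lands in the valid range $\{1,\ldots,t_k\}$ under the standing assumptions (in particular that $R_k/\alpha_k < t_k$ keeps $j_k \ge 1$), and that equality cases in $f(j,k)=0$ or repeated demand values (where $D_{j+1}^s = D_j^s$) do not break the argument — but these only produce ties, not a different minimizer, so the conclusion stands.
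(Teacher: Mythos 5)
Your proposal is correct and follows essentially the same route as the paper's proof: it invokes Lemma~\ref{lemma_rk_tk} to place the sign change of $f(j,k)=R_k-t_k.\alpha_k+j.\alpha_k$ at $j_k=t_k-\lfloor R_k/\alpha_k\rfloor$, uses $\Delta E_{D_j^s,D_{j+1}^s}=(D_{j+1}^s-D_j^s)f(j,k)$ with $D_{j+1}^s-D_j^s\ge 0$ to get the decreasing-then-increasing pattern of $E_{D_j^s}$, and combines this with Lemma~\ref{lemma_member} to conclude. Your extra attention to the floor-function boundary and tie cases is a welcome refinement but does not change the argument.
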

\begin{proof} By lemma \ref{lemma_rk_tk}, $R_k < t_k.\alpha_k$.
Earlier we assumed $f(j,k)=R_k-t_k.\alpha_k + j.\alpha_k$. Hence,
$f(j,k)$ has a negative value for $j < t_k-R_k/\alpha_k$ and a
positive value for $j>t_k-R_k/\alpha_k$.

As $\Delta E_{D^s_j,D^s_{j+1}}=(D_{j+1}^s-D_j^s)f(j,k)$ and
$D_{j+1}^s-D_j^s \ge 0$, we can write $E_{D_1^s} \ge E_{D_2^s}
\ge \cdots E_{D_{j_k}^s}$ and $E_{D_{j_k}^s} \le E_{D_{j_k+1}^s}
\le \cdots E_{D_{t_k}^s}$ where $j_k=t_k-\lfloor
R_k/\alpha_k\rfloor$.

The above result along with lemma \ref{lemma_member} proves that
cost is minimum when reservation is done for an amount of
resources equal to $D_{j_k}^s$ .
\end{proof}
~

It is to be noted that the index of the demand in the sorted
demand vector, for which the total cost is minimized, is
independent of the values of demand vector.

So, from theorem \ref{t_minimum} we
can conclude that for a single type contract $k$, the optimal
reservation is the $j^{th}$ smallest element in the demand vector
where $j = t_k - \lfloor R_k/\alpha_k\rfloor$. That means we do
not need to sort the demand vector since we can find the $j^{th}$
smallest element in $O(t_k)$ time using randomized select~\cite{cormen} algorithm.

In situations where demand vector is not available for entire duration $t_k$, but for a
smaller duration $t_k^\prime$, we just optimize the
cost over the duration $t_k^\prime$. Let $j=t_k^\prime -\lfloor R_k/\alpha_k\rfloor$. By using logic
similar to the above, it can be shown that optimal reservation in this case is $D_j^s$, if $j>0$ and it is zero,
otherwise ($j$ maybe negative for small values of $t_k^\prime$.  By lemma 1, $R_k<t_k.\alpha_k$ but
$R_k$ may be greater than $t_k^\prime.\alpha_k$.)

\section{Heuristic-based Resource Reservation Leading to Sub-Optimal Solutions}\label{sec:heuristic1}
The heuristic-based resource reservation as discussed in
Section~\ref{sec_heuristic} provides the optimal solution like
IPP-based resource reservation under the assumption of having a
single type contract $k$ and $|D| \leq t_k$. In this section we extend the heuristic to more general case
where these assumptions are relaxed to allow the following situations
\begin{enumerate}
  \item The demand vector size can be more than the contract duration~(i.e. $|D| > t_k$) and
  \item More than one contracts/providers are available~(i.e. considering multiple contracts).
\end{enumerate}
We propose the following algorithms to provide sub-optimal solutions for such situations.
First we consider single contract $k$ with any
demand duration and then we consider multiple contracts with any
demand duration.

\subsection{Single Contract Reservation}
In this reservation strategy, only one
contract~($k$) is considered at a time. The $k^{th}$ contract is
denoted by $C_k$ which is a tuple $(R_k,t_k,\alpha_k)$. The
demand duration may be more than $t_k$. The reservation decision
is taken for each contract duration~(called segment) separately
during the whole duration of the demand vector as shown in
Fig.~\ref{Fig_SingleContractReservation}.

Let the demand vector $D$ be available for stages
$t=1,2, \cdots T$. We divide the whole duration of $T$ into
multiple segments. If $T$ is divisible by $t_k$, then each
segment is of duration $t_k$.
 Otherwise, for   $i=1,2, \cdots \lfloor T/t_k\rfloor$ segment duration is $t_k$ and  the last segment is of duration
$T-t_k.\lfloor T/t_k\rfloor$. If we do not allow reservation made at two different stages to overlap,
 the optimal reservation for each segment is given by
theorem \ref{t_minimum} as shown in Section~\ref{sec_heuristic}
and the amount to be reserved at intervals of $t_k$ is given by
Algorithm~\ref{Algo_SingleContractReservationAlgo}.

\begin{figure}[h!]
\centering
\includegraphics[width=3.4in]{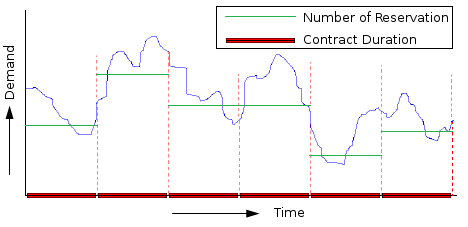}
\caption{Single Contract Reservation}
\label{Fig_SingleContractReservation}
\end{figure}

\begin{algorithm}
 \DontPrintSemicolon
 \SetKwInOut{Input}{input}\SetKwInOut{Output}{output}
 \Input{ Demand Vector $D[1..T]$, Contract $C_k$.}
 \Output{Reservation Decision in vector $x[1 ..\lceil \frac{T}{t_k} \rceil ]$}
 \Begin{
  $T_r \leftarrow T$\;
  $S \leftarrow 1$\;
  $i \leftarrow 1$\;
  \While {$T_r>0$}
  {$t_d \leftarrow \min (T_r,t_k$)\; /* $t_d$ is $t_k$ except possibly for the last segment */
  $j \leftarrow t_d - \lfloor \frac{R_k}{\alpha_k}\rfloor$\;
  $F \leftarrow S+t_d-1$\;
  \If {$(j>0)$} {$x[i] \leftarrow j^{th}$ smallest element in $D[S \cdots F]$} \;
   \Else {$x[i] \leftarrow 0$}\;
   \textbf{reserve} $x[i]$ resources at stage $S$.\;
  $T_r \leftarrow T_r-t_d$\;
  $S \leftarrow S+t_d$\;
  $i \leftarrow i+1$\;
    }
 }
 \caption{Single Contract Reservation\label{Algo_SingleContractReservationAlgo}}
\end{algorithm}

For each of the $\lceil \frac{T}{t_k} \rceil$ segments, we find
the $j^{th}$ smallest element of the segment of demand vector
that is covered by that contract where $j=t_d - \lfloor
\frac{R_k}{\alpha_k}\rfloor$~($t_d$ is $t_k$ except possibly for
the last segment). The decided amount of resources are reserved at
the beginning of that segment.

\subsection{Multiple Contracts Reservation}
While Single Contract Reservation algorithm
considers single type of contract, the reservation strategy presented here
considers multiple types of contracts provided by the same CSP as
well as different CSPs. Without loss of generality, we can assume
$t_1 > t_2 > \ldots > t_K$ where $t_k$ is the contract duration of
$k^{th}$ type of reservation contract. We make a realistic
assumption that $R_1/t_1 < R_2/t_2 < \ldots < R_K/t_K$ where
$R_k$ is the reservation charge of $k^{th}$ type of reservation
contract. If for example reservation cost of a one-year contract
is not greater than  one third of that of a three-year contract
no one will ask for a three year contract.
 \begin{algorithm}
 \DontPrintSemicolon
 \SetKwInOut{Input}{input}\SetKwInOut{Output}{output}
 \Input{Demand Vector $D[1..T]$, Contract Vector $C[1..K]$.}
 \Output{Reservation Decision in vector $x[1..K][1..max(\lceil \frac{T}{t_k} \rceil)]$}
 /*$C[i]=(R_i,t_i,\alpha_i)$ and $t_1>t_2>\ldots t_K$ */\\
 \Begin{

 \For{$i\leftarrow 1$ \KwTo $K$}
 {

  $temp \longleftarrow Single-Contract-Reservation(D,C[i])$\;
  /* $temp$ holds the reservation decision vector */\;
  \For{$j\leftarrow 1$ \KwTo $|temp|$ }
  {
    \emph{\small{/*~Update the Remaining Demands~ */}}\;
    \For{$l\leftarrow 1$ \KwTo $t_i$}
    {
    $D[(j-1).t_i+l] \longleftarrow \max (D[(j-1).t_i+l] - temp[j], 0)$\;
    }
  }
 }
 }
 \caption{Multiple Contracts Reservation}
 \label{Multiple Contracts Reservation}
\end{algorithm}

As the reservation cost per hour is minimum for contract type 1,
we would reserve as much as possible for the longest duration
$t_1$ using Single Contract Reservation algorithm for contract
type 1. We will repeat for the successive smaller duration
contracts after updating the remaining demands. After considering
all the available contracts, the remaining demands will be
fulfilled from the on-demand resources.

\section{Implementation and Evaluation}\label{sec:result}
The algorithms for heuristic-based resource reservation have been implemented and their performances have been evaluated and compared with the optimal (IPP-based) one. On-demand and reserved pricing models offered by Amazon EC2 have been considered for this purpose. For reserved pricing model, both, 1-year and 3-years contracts have been considered. The demand vector of VMs have been parsed with the real traces of loads from four different \textsl{standard workload archives} of SDSC Blue Horizon, SDSC SP2, Sandia National Labs and High Performance Computing Center North(Sweden).
\subsection{Experimental Setup}
The IP formulation for this problem has been solved using
\textsl{GLPK (GNU Linear Programming Kit) \cite{glpk} Version
4.47}. The heuristic-based resource reservation algorithms have
been implemented using \textsl{Java SDK v6}. Both the IPP and the
proposed algorithms have been executed on a quad core Intel i5
2.4GHz processor with 4GB RAM.

As discussed earlier, IPP is NP-hard and in our
experimental set-up it had not been possible to
solve problems involving demand vector of size more than 6
months. Therefore, the Amazon EC2 contracts for 1-year and
3-years have been scaled down to 1-month and 3-months
respectively. The associated costs have been scaled down using
the following equations.
\begin{equation}\label{Equ_Scale1}
  R_{1-month} = R_{1-year}/12
\end{equation}
\begin{equation}\label{Equ_scale2}
 R_{3-months} = (R_{3-years}/36)*3
\end{equation}
It is easy to observe that the above two equations keep the hourly discount of reserved VM over on-demand VM unchanged.

We have used a single type of VM in our simulation. However, it can handle multiple types of VM also. For multiple types of VM, we need to determine the demands for each category of VMs separately and then we need to apply our heuristics to find the amount of reservation for each category. The properties of the VM we use in our simulation are listed in Table ~\ref{table_vm_properties}.
\begin{table}[h!]
\renewcommand{\arraystretch}{1.3}
\caption{Properties of the VM used in the Simulation}
\label{table_vm_properties} \centering
\begin{tabular}{|l|c|}
  \hline
  Property & Value \\
  \hline
  \hline
  Type of VM & Standard large (linux) \\
  \hline
  Reservation Cost(1-month) & \$32.00 \\
  \hline
  Reservation Cost(3-months) & \$20.25 \\
  \hline
  On-demand Usage Cost & \$0.24/hour \\
  \hline
  Reserved VM (1-month) Usage Cost & \$0.136/hour \\
  \hline
  Reserved VM (3-month) Usage Cost & \$0.108/hour \\
  \hline
\end{tabular}
\end{table}

\subsection{Experimental Results and Discussion}
Though the experiment has been carried out for four different sets
of demand data, in this section we use the results of SDSC Blue
Horizon and Sandia National Labs workload archives for the
evaluation. The demand data derived from SDSC Blue Horizon
workload archive has the property of uniform distribution of the
demands as shown in Fig.~\ref{Fig_DD_Blue_Horizon} . Whereas a
non-uniform distribution of the demand is observed in the demand
data derived from Sandia National Labs workload archive as shown
in Fig.~\ref{Fig_DD_Sandia}.

\begin{figure}[h!]
\centering
\includegraphics[width=3.7in]{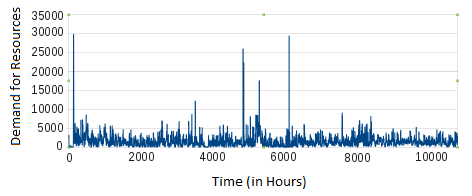}
\caption{Uniform Demand Data of SDSC Blue Horizon}
\label{Fig_DD_Blue_Horizon}
\end{figure}

\begin{figure}[h!]
\centering
\includegraphics[width=3.5in]{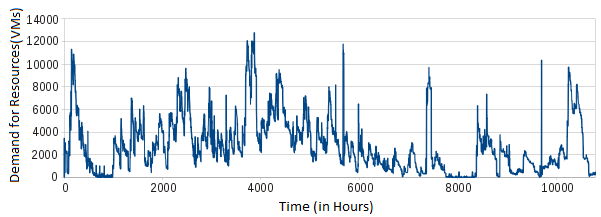}
\caption{Non-Uniform Demand Data of Sandia National Lab}
\label{Fig_DD_Sandia}
\end{figure}

Minimizing the total cost of deploying and running an application
in cloud is the primary objective of optimal resource reservation
problem. So we use \textit{Total Cost} and \textit{Cost Increase}
as the metrics for comparing the results of different reservation
strategies. The metric \textit{Total Cost} includes the one time
reservation cost and the usage costs of on-demand VMs as well as
reserved VMs to meet the demands for a particular duration.
Whereas the metric \textit{Cost Increase} measures the percentage
increase in the total cost in comparison with the Optimal (IPP)
strategy to meet the demands of VMs for a particular duration.
Apart from Total Cost and Cost Increase, we also use the metric
\textit{Overhead} to show the total computational
time needed by different reservation strategies under the
experimental environment. This metric is useful to analyse the
feasibility of a particular reservation strategy.

\begin{figure}[h!]
\centering
\includegraphics[width=3.5in]{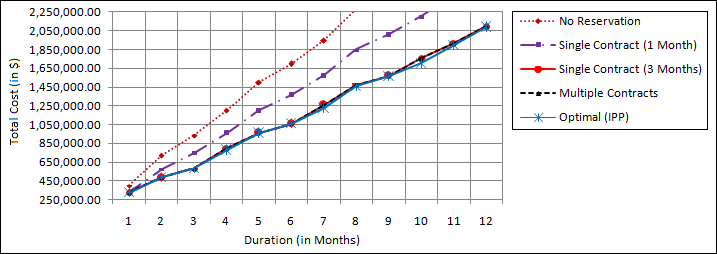}
\caption{Total Cost for SDSC Blue Horizon}
\label{Fig_TotalCost_SDSC}
\end{figure}

\begin{figure}[h!]
\centering
\includegraphics[width=3.5in]{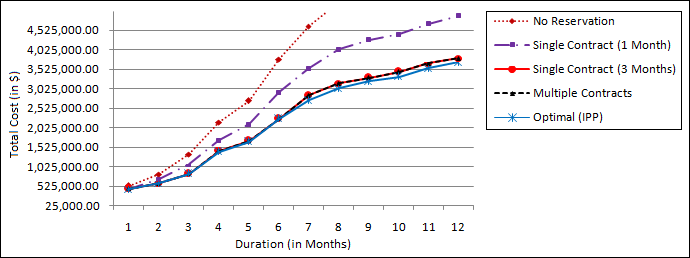}
\caption{Total Cost for Sandia National Lab}
\label{Fig_TotalCost_Sandia}
\end{figure}

Total costs for different duration, ranging from 1 month to 12
months, are shown in Fig.~\ref{Fig_TotalCost_SDSC} for the SDSC
Blue Horizon workload archive and in
Fig.~\ref{Fig_TotalCost_Sandia} for the Sandia National Labs
workload archive. We have compared the total costs for optimal
strategy (IPP), heuristic-based strategies with multiple
contracts, with single contract for 1 month and with single
contract for 3 months. Total cost without any reservation is also
depicted in the figures.

From the above figures, following observations are made:
\begin{enumerate}
  \item While comparing with {\em no reservation}, it has been observed that all the reservation strategies reduce the total cost significantly.
  \item Contracts with longer duration provide reduced costs compared to the contracts with shorter duration. For examople, heuristic-based reservation algorithm for single contract of 3-months outperforms single contract of 1-month reservation. This is because, for any two contracts $i$ and $j$ such that $t_i < t_j$, ${R_{j} \over R_{i}}$ is always less than ${t_{j} \over t_{i}}$.
  \item Total costs of heuristic-based reservation for multiple contracts and single contract for 3 months are very close to the total costs of optimal (IPP) reservation strategy.
 \item The relative performance of different reservation strategies are unaffected by the distribution (uniform or non-uniform) of the demand data.
\end{enumerate}

\begin{table}[h!]
\renewcommand{\arraystretch}{1.3}
\caption{Cost Increase (in \%) of Different Resrvation Strategies compared to Optimal(IPP)Strategy}
\label{table_CostIncrease} \centering
\begin{tabular}{|c|c|c|c|c|c|c|}
  \hline
  Duration~(months) & 1 & 2 & 3 & 4 & 5 & 6 \\
  \hline
  \hline
  No Reservation & 23.61 & 47.88 & 59.95 & 53.50 & 56.00 & 60.05 \\
  \hline
  Single Contract & 1.59 & 19.25 & 28.30 & 23.32 & 25.25 & 29.23 \\
  (1-month) & & & & & & \\
  \hline
  Single Contract & 1.14 & 0.00 & 0.00 & 1.90 & 0.10 & 0.00 \\
  (3-months) & & & & & & \\
  \hline
  Multiple Contracts & 0.93 & 0.01 & 0.00 & 1.82 & 0.10 & 0.00 \\
  \hline
  \hline
  \hline
  Duration~(months) & 7 & 8 & 9 & 10 & 11 & 12 \\
  \hline
  \hline
  No Reservation & 58.71 & 56.00 & 57.75 & 59.18 & 58.15 & 59.62 \\
  \hline
  Single Contract & 28.20 & 26.82 & 28.58 & 29.40 & 28.27 & 29.11 \\
  (1-month) & & & & & & \\
  \hline
  Single Contract & 2.78 & 0.92 & 0.26 & 3.18 & 1.22 & 0.36 \\
  (3-months) & & & & & & \\
  \hline
  Multiple Contracts & 2.74 & 0.92 & 0.26 & 3.15 & 1.22 & 0.36 \\
  \hline
\end{tabular}
\end{table}

The percentage increases in the total costs in case of different heuristic-based strategies over the optimal~(IPP) strategy are tabulated in Table~\ref{table_CostIncrease} for SDSC workload archive. Clearly, Single Contract~(1 month) strategy is not acceptable. However, Single Contract~(3 months) as well as Multiple Contracts perform well and the average increase in the total costs is below 1\%. The relative increases in costs for both, uniform and non-uniform demand data, are shown in Fig.~\ref{Fig_CostIncrease}. Following observations are made from the graph:

\begin{enumerate}
  \item Increase in cost is higher for non-uniform demand data compared to uniform demand data.
  \item For longer duration, the cost increase is higher due to cumulative effect.
  \item The cost increase tends to 0\% for durations which are multiple of 3.
  \item The cost increase is maximum for durations which are multiple of 3 + 1~(e.g. 1, 4, 7 etc) and gradually decreases till the next duration which is multiple of 3.
\end{enumerate}

\begin{figure}[h!]
\centering
\includegraphics[width=3.5in]{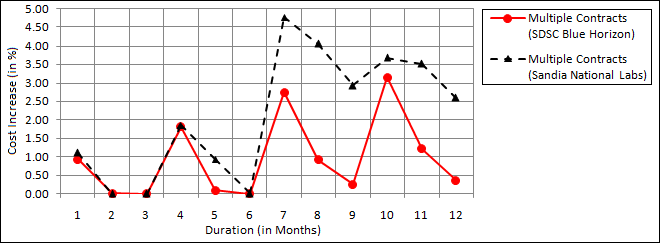}
\caption{Percentage of Increase in Total Cost Relative to Optimal (IPP)}
\label{Fig_CostIncrease}
\end{figure}

From the above analysis, it may be concluded that the heuristic-based reservation for multiple contracts or single contract with longer duration~(in this case 3 months) may be used up to certain demand duration (in this case 6 months -  which actually will be 6 years if scaling down effect is removed).  If demand duration is longer (here longer than 6 months), the cost increase in comparison with the optimal strategy is much higher, and hence the effectiveness of the heristic-based solution is reduced.

Finally, Table~\ref{table_ExecutionTime} shows the effectiveness
of our proposed heuristic-based reservation strategies in terms
of the metric Overhead. While Single Contract (3 months) and
Multiple Contracts require 1113$\mu$s and 2220$\mu$s respectively
for 12 months demand duration, the Optimal strategy requires
7x$10^5 \mu$s for only 1 month demand duration. As expected and
analyzed in Section~\ref{sec_heuristic},
Table~\ref{table_ExecutionTime} shows that the heuristic-based
reservation strategies have a linear time complexity compared to
the exponential time complexity of the optimal strategy.

\begin{table}[h!]
\renewcommand{\arraystretch}{1.3}
\caption{Overhead (in $\mu$s) of Different Reservation Strategies}
\label{table_ExecutionTime} \centering
\begin{tabular}{|c|c|c|c|c|c|c|}
  \hline
  Duration~(months) & 1 & 2 & 3 & 4 & 5 & 6 \\
  \hline
  \hline
  Single Contract & 271 & 463 & 535 & 582 & 659 & 710 \\
  (3-months) & & & & & & \\
  \hline
  Multiple Contracts & 613 & 913 & 1011 & 1170 & 1303 & 1468 \\
  \hline
  Optimal~(IPP) & 7 x & 25 x & 51 x & 101 x & 218 x & 367 x \\
   & $10^5$ & $10^5$ & $10^5$ & $10^5$ & $10^5$ & $10^5$ \\
  \hline  \hline \hline
  Duration~(months) & 7 & 8 & 9 & 10 & 11 & 12 \\
  \hline
  \hline
  Single Contract & 793 & 863 & 924 & 1024 & 1083 & 1113 \\
  (3-months) & & & & & & \\
  \hline
  Multiple Contracts & 1615 & 1726 & 1861 & 1988 & 2130 & 2220 \\
  \hline
  Optimal~(IPP) & 635 & 776 & 1005 & 1228 & 1654 & 1965 \\
   & x$10^5$ & x$10^5$ & x$10^5$ & x$10^5$ & x$10^5$ & x$10^5$ \\
  \hline
\end{tabular}
\end{table}

\section{Conclusion and Future Work} \label{sec:conclude}
In this paper, we have proposed a
heuristic-based algorithm to solve the cost optimization problem
of cloud resource provisioning in linear time. We have  mathematically proved that the
heuristic provides optimal solution under some restrictions. We then extend the heuristic to cover the situations
where these restrictions are removed. The experimental
results show that the solution is very close to the optimal one
with minimal overhead. Even though the IPP based approach gives
the best solution for the problem, the total number of variables becomes
too large if demand data is of duration more than six months and thus become
impossible to solve by existing softwares like {\it GLPK}. At the same time, the linear
time heuristic makes it possible to work on hourly demand data of length 3 years or more without
any difficulty.  We can conclude that the proposed heuristic based approach help to overcome the drawback of
IPP based method  sacrificing very little on the total cost incurred.

To simplify the heuristic, we have ignored the
uncertainty of demands and prices in this work.   Future  work in this area
could be how to bring those uncertainties into consideration.
We would also like to include  the option of spot pricing scheme to  minimize
the total cost further.

\bibliographystyle{IEEEtran}


\end{document}